\title{Coordinating ``7 Billion Humans'' is hard} 
\author{Alessandro Panconesi}{Sapienza University of Rome, Italy}{ale@di.uniroma1.it}{https://orcid.org/0000-0002-2169-3067}{}
\author{Pietro Maria Posta}{Sapienza University of Rome, Italy}{posta.1966929@studenti.uniroma1.it}{https://orcid.org/0009-0004-2997-5538}{}
\author{Mirko Giacchini}{Sapienza University of Rome, Italy}{giacchini@di.uniroma1.it}{https://orcid.org/0009-0009-5704-098X}{}
\authorrunning{A. Panconesi, P. M. Posta, and M. Giacchini} 
\keywords{video games, computational complexity, NP, PSPACE}
\newcommand{\np}{\textrm{NP}\xspace} 
\newcommand{\pspace}{\textrm{PSPACE}\xspace} 
\newcommand{\npspace}{\textrm{NPSPACE}\xspace}
\newcommand{\cmdtext}[1]{\texttt{#1}\xspace} 
\newcommand{\stepcmd}{\cmdtext{step}}
\newcommand{\stepdircmd}{\cmdtext{step \{direction\}}}
\newcommand{\upfull}{\cmdtext{up}}
\newcommand{\downfull}{\cmdtext{down}}
\newcommand{\leftfull}{\cmdtext{left}}
\newcommand{\rightfull}{\cmdtext{right}}
\newcommand{\upc}{\texttt{u}}
\newcommand{\downc}{\texttt{d}}
\newcommand{\leftc}{\texttt{l}}
\newcommand{\rightc}{\texttt{r}}
\newcommand{\upleftc}{\texttt{(ul)}}
\newcommand{\downleftc}{\texttt{(dl)}}
\newcommand{\uprightc}{\texttt{(ur)}}
\newcommand{\downrightc}{\texttt{(dr)}}
\newcommand{\rset}{\mathcal{R}\xspace}
\newcommand{\cwset}{\mathcal{C}\xspace}
\newcommand{\num}{\texttt{num}\xspace}
\newcommand{\enc}{\texttt{enc}\xspace}
\newcommand{\stack}{\texttt{stack}\xspace}
\newcommand{\enforcesublevel}{\text{enforce\#}\xspace}
\newcommand{\overbar}[1]{\mkern 1.5mu\overline{\mkern-1.5mu#1\mkern-1.5mu}\mkern 1.5mu}
\newcommand{\problemtext}[1]{\texttt{#1}\xspace}
\newcommand{\BHessential}{\problemtext{7BH-Essential}}
\newcommand{\BHholes}{\problemtext{7BH-Holes}}
\newcommand{\intersectionnonemptiness}{Intersection Non-Emptiness Problem\xspace}
\begin{document}

\maketitle

\begin{abstract}
In the video game ``7 Billion Humans'', the player is requested to direct a group of workers to various destinations by writing a program that is executed simultaneously on each worker. While the game is quite rich and, indeed, it is considered one of the best games for beginners to learn the basics of programming, we show that even extremely simple versions are already \np-Hard or \pspace-Hard.
\end{abstract}

\section{Introduction}
In a world where robots have occupied every possible job position, humans are finally free to dedicate themselves to their favourite pastimes. However, in this utopian world, the work ethic of yesteryear reigns supreme and the only thing humans desire is good-paying jobs. To appease them, the robots construct one colossal building, so colossal to be visible from outer space, and hire all 7 billion humans living on Earth as well-paid white-collar workers. Now, robots are faced with the challenge of coordinating the human workforce to keep them all constantly entertained. 

It is in such a world that ``7 Billion Humans'' takes place. Released in 2018 as the successor of `Human Resource Machine', ``7 Billion Humans'' is a puzzle video game praised by tech reviewers as one of the best games to learn the basics of programming \cite{commonsense7BH, hackr7BH, hubspot7BH}. The player, who takes on the role of the robots in the story, must coordinate a group of workers by specifying their actions by means of an ad-hoc programming language. While the programming language in the game is quite rich, also containing "if" statements and "go-to" commands, in this paper, we will focus on the core mechanic of the game: moving the workers. In particular, the goal is to move the workers into an accepted configuration by writing a program that is executed simultaneously by each one of them. While moving, the workers will have to navigate through walls, desks, plants, and other objects that block their movement, as well as holes where workers can fall through. This extremely limited set of commands and objects constitutes the core of the game since they appear in essentially all the game levels. Even under these limitations, we show that the player (and hence the robots) will have a hard time coordinating the humans.

Our work falls within the rich area of video-game computational complexity. In recent years, several extremely popular video games have been proven to be \np-Hard or \pspace-Hard, such as Super Mario Bros. and other Nintendo games \cite{adgv15, dvw16}, Portal and several other 3D games \cite{dll16}, Trainyard \cite{alp18, alp20}, Candy Crush \cite{gln14}, and many others \cite{kps08, viglietta12}.

Let us now describe the game ``7 Billion Humans'' and our contributions. 

\subsection{Our Contributions}\label{sec:contributions}
A level of ``7 Billion Humans'' consists of a grid of cells containing \emph{workers} and \emph{objects}. The player must write a program, which is executed by \emph{every} worker simultaneously, in order to satisfy the requests of the level designer. We consider only the most basic kind of request: the workers must be moved from their starting configuration into an accepted configuration. More precisely, some cells of the grid are \emph{accepting cells} and to solve a level, all the workers must be standing on an accepting cell after executing the program. There are many commands at the disposal of the user to write the program, but we make use only of the most basic one: \stepdircmd, which is used to move the workers (all at the same time) by one cell in a given direction, which can be one of \upfull, \downfull, \leftfull, \rightfull, \upfull-\leftfull, \upfull-\rightfull, \downfull-\leftfull, or \downfull-\rightfull. A cell can either be empty or contain an object.\footnote{We can assume that workers start in empty cells and the accepting cells are all empty.} The simplest type of objects are the \emph{walls} and, as one might expect, stepping into a wall results in a non-movement.\footnote{In the game there are also other obstacles, such as desks and plants: since they all act the same, we will always talk about walls. Other workers also behave as obstacles if hit. However, this will never happen in our reductions.} We also restrict to instances where each connected component contains at most one worker.\footnote{This models those levels of the game where the workers are isolated in their own office.} More formally, for each worker $w$, let $R_w$ be the set of \emph{empty} cells that can be reached by $w$ from its starting position and imagining that there are no other workers on the grid. Then, we restrict to instances such that $R_w \cap R_{w'} = \varnothing$ for all workers $w\neq w'$.

We show in figure \ref{img:example-no-holes} an example of a level. We abbreviate the command to step in one direction with \upc, \downc, \leftc, \rightc, \upleftc, \uprightc, \downleftc, \downrightc, and a sequence of steps in the same direction using exponentiation (e.g., $\rightc^4$ instead of \rightc\rightc\rightc\rightc). For a generic finite alphabet $\Sigma$, we denote with $\Sigma^*$ the set of all finite strings consisting of symbols of $\Sigma$. A program is therefore represented as a string over the alphabet $\{\upc, \downc,\leftc,\rightc, \upleftc, \uprightc, \downleftc, \downrightc\}$. For simplicity, when a program $\pi\in\{\upc, \downc,\leftc,\rightc, \upleftc, \uprightc, \downleftc, \downrightc\}^*$ solves a level, we also say that the level \emph{accepts} the string $\pi$.

\begin{figure}[ht]
\centering
\begin{subfigure}[b]{\textwidth}
     \centering
     \includegraphics[width=\textwidth]{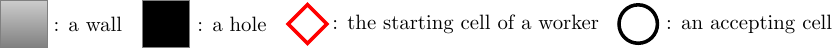}
 \end{subfigure}
 \begin{subfigure}[b]{0.49\textwidth}
     \centering
     \includegraphics[width=0.6\textwidth]{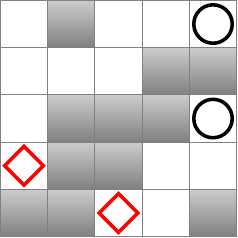}
     \caption{This level can be solved, for example, by the program $\upc^2\rightc^2\upc\rightc^2\upc$ or by the program $\rightc\upc^2\rightc^2\upc\rightc^2$. Instead, $\rightc\upc\rightc\upc$ does not solve the level because only one worker reaches an accepting cell.}
     \label{img:example-no-holes}
 \end{subfigure}
 \hfill
 \begin{subfigure}[b]{0.49\textwidth}
     \centering
     \includegraphics[width=0.6\textwidth]{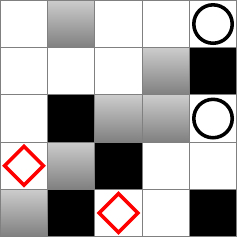}
     \caption{This level can be solved by the program $\rightc\upc^2\rightc^2\upc\rightc^2$. Instead, the program $\upc^2\rightc^2\upc\rightc^2\upc$ does not work anymore because one of the two workers would get stuck in a hole.}
     \label{img:example-holes}
 \end{subfigure}
\caption{Two examples of game levels. The level on the left contains only walls and empty cells, while the one on the right also contains holes. We assume that the grids are surrounded by walls. Note that there are two connected components, each with a single worker.}
\label{img:example-level}
\end{figure}

The decision problem that we consider is: given a level of ``7 Billion Humans'', say if the level is solvable or not. Since we included only the essential elements of the game, we call this problem \BHessential. We will show that even this extreme simplification is already \np-Hard.

\begin{theorem}
\label{thm:np-hard}
It is \np-Hard to check if a given level of ``7 Billion Humans'' is solvable, even using only \emph{walls}, \emph{empty} cells, the \stepcmd command and restricting to instances where each connected component contains at most one worker. That is, \BHessential is \np-Hard.
\end{theorem}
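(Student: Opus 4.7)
The plan is a polynomial-time reduction from 3-SAT. Given a 3-CNF formula $\phi$ over variables $x_1, \dots, x_n$ and clauses $C_1, \dots, C_m$, I would construct a level whose solving programs are precisely those encoding a satisfying assignment of $\phi$.

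First I would fix an encoding of truth assignments as programs: the program is partitioned into $n$ consecutive \emph{variable blocks}, and the $i$-th block is forced to be one of two short codewords according to whether $x_i$ is set to true or false. To enforce this structure, for each variable $x_i$ I would add a small \emph{variable gadget} --- a one-worker component whose maze admits only programs in which the $i$-th block is a valid codeword. For each clause $C_j$ I would then add a \emph{clause gadget} --- a one-worker component whose maze is designed so that the worker reaches its accepting cell iff at least one literal of $C_j$ is satisfied by the encoded assignment. Since each gadget is its own connected component, the instance satisfies the restriction on workers-per-component.

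The heart of the construction is the clause gadget, which I would build as a two-track device. The worker begins on a ``not yet satisfied'' track and is diverted onto a ``satisfied'' absorbing track whenever it passes through a block corresponding to a satisfied literal of $C_j$; blocks of variables not appearing in $C_j$ must be neutral, leaving the track unchanged. The diversions can be implemented by combining walls with diagonal moves: suitable diagonal moves act as one-way gates that both advance the worker along its corridor and change its track, while walls confine the ``satisfied'' track so that the worker cannot return to the ``not satisfied'' one once diverted.

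The main obstacle I anticipate is handling clauses that mix positive and negative literals. The satisfying symbol for a positive literal is the opposite of that for a negative one, so the diversion must be triggered by different symbols in different blocks, while simultaneously the ``satisfied'' track must remain absorbing and blocks of irrelevant variables must leave the worker's state untouched. With careful wall placement and use of all eight directions this should be feasible in size polynomial in $|\phi|$, after which correctness reduces to checking that every satisfying assignment yields a program solving the level and, conversely, that any solving program must be well-formed (by the variable gadgets) and encode a satisfying assignment (by the clause gadgets).
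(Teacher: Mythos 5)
Your high-level plan (variable gadgets enforcing a block-structured program, clause gadgets checking satisfaction, one worker per component) matches the paper's, but the central mechanism you propose cannot work in this model, and the paper's proof is built precisely to avoid it. In a level containing only walls and empty cells, every move of an isolated worker is reversible: if a worker moves from cell $A$ to cell $B$ in direction $d$, then $A$ is empty and the opposite direction returns it to $B$'s predecessor. Hence within a single one-worker component there are no one-way gates and no absorbing states, whether or not diagonal moves are used. Your clause gadget's ``satisfied'' track therefore cannot be made absorbing, and your variable gadget cannot ``admit only programs in which the $i$-th block is a valid codeword'' --- the worker can always wiggle back and forth, so the set of solving programs is necessarily much larger than the set of well-formed codeword sequences. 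The only source of irreversibility available is the \emph{interaction between components}: a step that is a no-op in one component (a worker pressed against a wall) but a real move in another permanently desynchronizes the two workers. Any correct construction must be phrased in these terms.

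The paper's proof does exactly this. It reduces from Positive 1-in-3-SAT rather than 3-SAT, which removes your anticipated obstacle of mixed positive and negative literals: every clause is checked by counting how many of its three variables received an \upfull move, and the clause component is laid out so that its worker lands on an accepting cell if and only if that count is exactly one --- a displacement/parity argument, not a state-machine with absorption. A separate five-worker \emph{diagonal} gadget forbids diagonal moves outright (so nothing hinges on the underspecified semantics of diagonal steps near walls), and a four-worker \emph{assignment} gadget guarantees that any wall-hitting causes two workers to overlap irrecoverably. Finally, the paper proves a normalization step: any solving program can be assumed to have the canonical form $\rightc^4\sigma_1^4\rightc^4\dots\sigma_n^4\rightc^4$ with $\sigma_i\in\{\upc,\downc\}$, because extra \leftfull moves and repeated vertical moves change the truth value of a variable in \emph{all} clauses simultaneously and only the last choice matters. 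Your proposal is missing both this normalization argument and a clause mechanism compatible with reversibility, so as written it does not go through.
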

\emph{Holes} are another common object in the game. When a worker steps on a cell containing a \emph{hole}, it gets stuck for the rest of the computation. Since our goal is to have each worker on an accepting cell, even if a single worker steps on a hole, the level is lost. An example of a level with holes is shown in figure \ref{img:example-holes}. We call \BHholes the decision problem previously described where we add \emph{holes} in addition to the already mentioned elements. Adding holes might make the problem more difficult, in fact, \BHholes is \pspace-Hard.

\begin{theorem}
\label{thm:pspace-comp}
Determining whether a given level of ``7 Billion Humans'' is solvable is \pspace-Complete, where the level contains only \emph{walls}, \emph{holes}, \emph{empty} cells, uses only the \stepcmd command and where each connected component contains at most one worker. That is, \BHholes is \pspace-Complete.
\end{theorem}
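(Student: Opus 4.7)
Membership in \pspace is the easy direction. A configuration of the level is fully described by listing, for each worker, its current cell together with a ``fallen in a hole'' flag, which fits in space polynomial in the input. A nondeterministic polynomial-space procedure guesses the solving program one command at a time, updates the configuration after each command, and accepts as soon as every worker stands on an accepting cell; Savitch's theorem then yields a deterministic \pspace algorithm.

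For hardness, I would reduce from the \intersectionnonemptiness for DFAs, which is \pspace-complete even over a binary alphabet: given DFAs $D_1,\ldots,D_k$ over $\Sigma=\{0,1\}$, decide whether some $w\in\Sigma^*$ is accepted by every $D_i$. Given such an instance, I build a level with $k$ workers, one per DFA, each in its own connected component, which respects the restriction on components of the theorem. Fix in advance a uniform length-$L$ encoding of the two symbols of $\Sigma$ as sequences of step commands. For each state $q$ of $D_i$, place a ``hub'' cell $c_q$ surrounded by a small routing gadget with the following promised behaviour: starting from $c_q$ and reading $L$ consecutive commands, the worker ends either at $c_{\delta(q,a)}$, if those $L$ commands encode some $a\in\Sigma$, or in a hole inside the gadget otherwise. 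The worker for $D_i$ starts at the hub of its initial state, and the accepting cells of the level are precisely the hubs of the accepting states of each $D_i$, with no intermediate gadget cell marked as accepting.

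Because every worker consumes the same global command string and all symbol encodings share length $L$, the workers remain synchronised: after every block of $L$ commands, each surviving worker sits at the hub of the DFA state corresponding to the prefix of $\pi$ decoded so far. Hence a program $\pi$ solves the level if and only if $|\pi|$ is a multiple of $L$, $\pi$ decodes to some $w\in\Sigma^*$, and every $D_i$ accepts $w$ -- equivalently, $w\in\bigcap_i L(D_i)$. Since the overall construction is polynomial in $\sum_i|Q_i|$, this yields a polynomial-time reduction, and together with \pspace membership establishes \pspace-completeness.

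The main obstacle is producing the state gadgets. Single-cell, 8-directional movement prevents ``wiring'' arbitrary transitions directly, since the eight neighbours of $c_q$ are fixed by its geometric position and cannot be identified with arbitrary target hubs. I would overcome this by laying out the hubs on a coarse grid and connecting them through disjoint routing corridors whose internal geometry -- essentially a branching decision tree of empty cells, with holes packed into every would-be ``wrong'' neighbour at each step -- implements the required transition function while trapping every malformed $L$-command sequence in a hole. This ``wires and junctions'' style is standard in video-game complexity reductions, and the combination of diagonal steps and holes should provide enough flexibility to embed each $D_i$ into a region of polynomial size without interference between distinct hubs.
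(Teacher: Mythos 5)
Your membership argument is essentially the paper's (nondeterministic guessing of the program plus Savitch), and your choice of source problem for hardness --- the \intersectionnonemptiness for DFAs --- is also the one the paper uses. The gap is in the hardness construction, and it sits exactly where you yourself locate ``the main obstacle.'' You propose to represent the current state of $D_i$ by the geometric position of its worker among $|Q_i|$ hub cells, and to have a fixed length-$L$ command block encoding a symbol $a$ carry the worker from $c_q$ to $c_{\delta(q,a)}$ for every $q$ simultaneously. But a fixed block of single-step commands produces a fixed net displacement unless walls absorb some of the steps; so the routing gadget around each hub must absorb a different, carefully prescribed number of steps in each direction so that the \emph{same} block yields $|Q_i|\cdot|\Sigma|$ different displacements, each landing \emph{exactly} on an arbitrary target hub, while every malformed block falls into a hole and distinct hubs' gadgets do not collide geometrically. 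This simultaneous placement problem is the entire difficulty of the reduction, and ``a branching decision tree with holes in the wrong neighbours'' does not resolve it: branching on the commands is easy (the paper's selector gadget does this), but \emph{merging} the branches back onto a common target hub whose position must be consistent across all source states, for all symbols, and transitively along computations, is not shown to be realizable in polynomial area --- or at all.

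The paper sidesteps this by never encoding the DFA state as a persistent geometric location. Instead, the solving program itself is forced to spell out the whole computation history --- blocks containing the input symbol and the declared state of every automaton, separated by an encoded $\#$ --- and the sub-levels only \emph{verify local consistency} between consecutive blocks: a selector gadget branches on the declared symbol and state, a forcer gadget checks that the next declared state equals $\delta_i(q,\sigma)$, and a go-back gadget returns the worker to a single fixed home cell, so no hub-to-hub routing is ever needed. Even then, one sub-level per automaton does not suffice, because a single worker cannot check the consistency of block $j$ with $j+1$ and also of $j+1$ with $j+2$; the paper uses two sub-levels per DFA (one for odd $j$, one for even $j$). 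Your proposal is missing both the mechanism that replaces hub routing and this parity device, so as written it does not yield a proof.
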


In the literature, it is known that allowing multiple workers in the same connected component, rather than including holes, might make the problem harder as well. Note that workers block each other when hit. The following is an immediate consequence of \cite[Theorem 2]{ccgls20}:\footnote{Consider the same reduction described in figure 3b of \cite{ccgls20}. Set every cell as accepting except for the three (or, in general, $k$) bottom-left cells. This proves Theorem \ref{thm:multiple-workers}.}

\begin{theorem}[\cite{ccgls20}, Theorem 2]
\label{thm:multiple-workers}
It is \pspace-Complete to check if a given level of ``7 Billion Humans'' is solvable, where the level contains only \emph{walls}, \emph{empty} cells, uses only the \stepcmd command and allows multiple workers in the same connected component.
\end{theorem}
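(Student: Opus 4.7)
The plan is to derive Theorem \ref{thm:multiple-workers} directly from Theorem 2 of \cite{ccgls20} with a small tweak to the ``accepting'' cells, since the footnote essentially tells us what adaptation is needed. The argument splits into membership in \pspace and \pspace-hardness.

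For membership, I would observe that an instance of \BHessential (or, more generally, the multi-worker variant) with grid of size $n$ and $w$ workers can be simulated using $O((w+\log n)\cdot \log n)$ bits of state: at any moment we only need to store the position of each worker and the current index into the program. Since we are nondeterministically guessing the program symbol by symbol in $\{\upc,\downc,\leftc,\rightc,\upleftc,\uprightc,\downleftc,\downrightc\}$ and verifying along the way that the final configuration places every worker on an accepting cell, membership follows from \npspace${}={}$\pspace by Savitch's theorem. (Alternatively one can bound the useful program length by the number of distinct global configurations, which is $n^w$, giving an exponentially long but polynomial-space certificate.)

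For hardness, I would take the reduction of figure~3b of \cite{ccgls20} verbatim: it is already a grid using only walls, empty cells, and the \stepcmd command, with multiple workers sharing a connected component, and it reduces a \pspace-Hard problem to the question of whether a distinguished worker (or collection of workers) can be driven to a specific set of target cells. The behavior of the construction is such that, for any program, each worker either ends in a ``success'' region or is funneled into a small ``failure'' region consisting of the $k$ bottom-left cells ($k=3$ in the figure). To cast this as an instance of our decision problem I would simply mark \emph{every} cell of the grid as accepting except for those $k$ bottom-left cells. Then a program drives all workers onto accepting cells simultaneously iff no worker is currently in the failure region, which, by the correctness of the original reduction, holds iff the underlying \pspace-Hard instance is a yes-instance. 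Because the transformation is polynomial-time and the resulting level satisfies the required restrictions (walls, empty cells, \stepcmd only), \pspace-hardness follows.

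The main thing to be careful about is matching the acceptance condition used in \cite{ccgls20} with the ``all workers on accepting cells at the end of program execution'' condition of \BHessential. In particular, one must check that in the cited construction the ``failure'' configuration in which workers get stuck in the $k$ bottom-left cells is genuinely the unique obstruction, so that negating it by designating those $k$ cells as the only non-accepting cells is equivalent to the original reachability question. Assuming the construction of \cite{ccgls20} behaves as described in the footnote, this step is immediate; everything else is bookkeeping about commands and connectivity, which is already guaranteed by the cited reduction.
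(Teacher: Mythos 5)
Your proposal matches the paper's own justification, which is exactly the footnote you cite: take the reduction of figure 3b of \cite{ccgls20} verbatim and mark every cell accepting except the $k$ bottom-left cells, with \pspace membership following by the same configuration-counting/Savitch argument used elsewhere in the paper. No gaps; this is essentially the same (very short) argument the authors intend.
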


The paper is organized as follows.
In Section \ref{sec:related-probs} we highlight some connections between ours and other problems. In Sections \ref{sec:np} and \ref{sec:pspace} we prove our two main theorems (Theorem \ref{thm:np-hard} and \ref{thm:pspace-comp}) and finally, in Section \ref{sec:conclusion}, we discuss some final remarks.

\section{Relations with other problems}\label{sec:related-probs}
In our reductions, the game level of ``7 Billion Humans'' will be divided into isolated sub-levels, each containing a single worker. Then, to solve the level, all the sub-levels must be solved simultaneously. Moreover, in our reductions, we will prevent diagonal movements: the only admissible directions are \rightfull, \leftfull, \upfull, and \downfull. This special structure of the level can be used to draw some relations with other problems.

\subsection{Robot Swarms}
If we consider the problem without holes, the problem can be described exactly using the robot-swarm terminology \cite{ccgls20}. In particular, \BHessential is equivalent to ``$1\times 1$ $k$-region relocation with uniform control signals in the single-step model''. In such problem, there are $k$ disjoint regions of a square grid, each containing a single $1\times 1$ robot. All robots move in the same direction by a single step (if they can), and the goal is for each robot to reach one destination cell. Therefore, Theorem \ref{thm:np-hard} implies that such robot-swarm problem is \np-Hard. 

It is known in the literature that if multiple robots can be in the same region, the problem is \pspace-Complete \cite[Theorem 2]{ccgls20}. The problem is \pspace-Complete also if robots move maximally rather than by a single step (this is called full-tilt model) \cite[Theorem 8]{ccgls20}. However, the reductions used for these two problems are very different from ours, and adapting them seems non-trivial. To the best of our knowledge, it is unknown whether \BHessential (and the corresponding robot-swarm problem) is \pspace-Hard.

\subsection{Simultaneous Maze Solving}
Each sub-level can be interpreted as a grid maze: the worker must find a path from its starting position to one of the accepting cells,\footnote{Note that passing upon an accepting cell is not enough: each worker must be standing on an accepting cell at the end of the sequence of moves.} avoiding the holes and navigating through the walls. Our results, then, entail that solving multiple mazes simultaneously is \np-Hard (Theorem \ref{thm:np-hard}), or \pspace-Hard if the mazes can contain holes (Theorem \ref{thm:pspace-comp}). The only other work on the topic, to the best of our knowledge, is the one of Funke et al. \cite{fns17}, which, however, studies very special mazes that are always solvable simultaneously.

\subsection{Intersection Non-Emptiness}
Each sub-level can also be interpreted as a deterministic finite automaton (DFA for short). In particular, a sub-level $w \times h$ naturally translates into a DFA with at most $w\cdot h$ states (corresponding to the cells), and with the transition function on the alphabet $\{\upc, \downc, \leftc, \rightc\}$ that simulates the behavior of the cells. Solving all the sub-levels is equivalent to finding a string that is accepted by a set of DFAs: this is a fundamental problem in automata theory known as \intersectionnonemptiness \cite{afhhjow21, lr92, wehar14} and first shown to be \pspace-Complete by Kozen \cite{kozen77}. The structure of our DFAs is very special: the undirected transition graph, excluding self-loops, is a subgraph of the $w\times h$ grid graph. Therefore, our Theorem \ref{thm:pspace-comp} entails that Intersection Non-Emptiness is \pspace-Complete even with this class of automata.

To the best of our knowledge, given the strong restrictions that we have to make on the DFAs, our results are not derivable from existing work. As an example, in the original \pspace-Hardness proof of Kozen \cite{kozen77}, the standard construction of the DFAs contains vertices having in-degree $(|Q|+|\Sigma|+1)^3$, where $Q$ and $\Sigma$ are the set of states and input symbols of a Turing Machine, therefore, the in-degree is at least $27$ and possibly much larger. Instead, our DFAs have an in-degree of at most 8, considering self-loops. Moreover, in such proof, the automata use several one-way transitions; instead, the transitions of our DFAs are reversible (except for states associated with holes).

\section{\np-Hardness of \BHessential}\label{sec:np}
In this section we prove Theorem \ref{thm:np-hard}. In particular, we show a polynomial-time reduction from Positive 1-in-3-SAT, notoriously known to be \np-Hard (see, e.g., \cite[page 259, problem LO4]{gj79}), to \BHessential. 
\begin{definition}[Positive 1-in-3-SAT]
The input of Positive 1-in-3-SAT consists of $n$ boolean variables, $x_1, x_2,\dots, x_n$, and a set of $m$ clauses, each containing exactly three distinct positive variables (i.e., there are no negated literals). The goal is to find a truth assignment to the variables such that each clause contains exactly one true variable.
\end{definition}

Fixed an instance of Positive 1-in-3-SAT, we make use of three types of gadgets: (i) the \emph{diagonal} gadget, to prevent diagonal movements, (ii) the \emph{assignment gadget} that, intuitively, allows assigning truth values to the variables, and (iii) one \emph{clause gadget} for each clause, to ensure that the truth assignment satisfies the original 1-in-3 formula. Each gadget will be built with multiple independent sub-levels that must be solved simultaneously. Each sub-level will contain exactly one worker. The final game level is obtained by stacking the sub-levels together and isolating them via walls.

\begin{figure}[h!]
\centering
\includegraphics[height=0.18\textwidth]{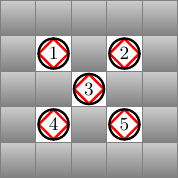}
\caption{The \emph{diagonal} gadget, used to prevent diagonal movements, consists of five sub-levels: for $i\in\{1,2,3,4, 5\}$, the $i$-th sub-level contains only the worker and the accepting cell labeled with $i$.}
\label{img:np-diagonal-gadget}
\end{figure}

The diagonal gadget is reported in figure \ref{img:np-diagonal-gadget}. It consists of five sub-levels that we draw together for brevity. In particular, for $i\in\{1,2,3,4, 5\}$, the $i$-th sub-level contains only the one worker and the one accepting cell labeled with $i$. Suppose the program contains a diagonal movement (i.e., \upleftc, \uprightc, \downleftc, or \downrightc), then, worker 3 would "overlap"\footnote{the workers are in different sub-levels, so they "overlap" if we imagine the sub-levels on top of each other} with another worker and it would become impossible to solve all the five sub-levels of the gadget (indeed, once two workers are overlapped in different sub-levels having the same walls, it is impossible to separate them). 

\subsection{Assignment Gadget}
This gadget consists of four sub-levels, reported together in figure \ref{img:np-assignment-gadget}. For $i\in\{1,2,3,4\}$, the $i$-th sub-level contains only the $i$-th worker and the accepting cells labeled with $i$. 

\begin{figure}[ht]
\centering
\includegraphics[width=0.5\textwidth]{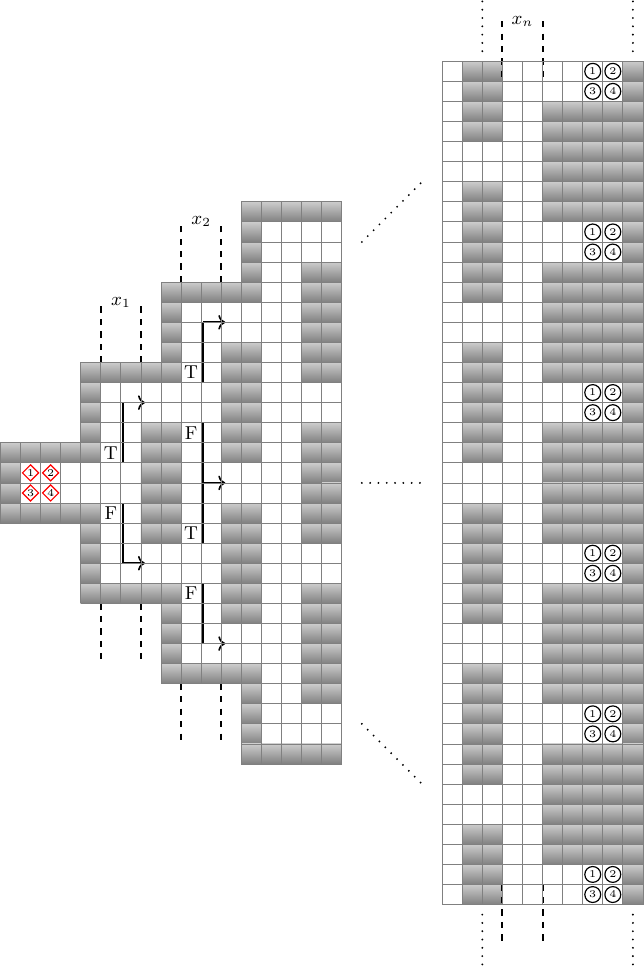}
\caption{Assignment Gadget. It consists of four sub-levels, drawn together for brevity. In particular, the $i$-th sub-level, for $i\in\{1,2,3,4\}$, contains only the worker with label $i$ on the left, and only the accepting cells with label $i$ on the right. The workers select the truth value of the variables by moving \upfull or \downfull.}
\label{img:np-assignment-gadget}
\end{figure}

By using four sub-levels, the gadget ensures that the workers cannot hit the walls, indeed, if that happened, two workers would overlap and at least one sub-level would become unsolvable (e.g., using the program $\rightc^5$, workers 1 and 2 overlap, making it impossible for them to simultaneously stand on an accepting cell). This property will be used in the clause gadgets.

Since the accepting cells are at the right end of the sub-levels, the workers must pass through all the variables, and select their truth values by moving \upfull or \downfull. Specifically, when the workers are in correspondence with the variable $x_i$ (that is, when the workers are in columns $4i+1$ and $4i+2$, assuming that the columns are numbered from left to right starting with 0 on the left wall), moving \upfull will set $x_i$ to true in all clauses while moving \downfull will set it to false.
Intuitively, we can think that the workers will move via a program of the form $\rightc^4\sigma_1^4\rightc^4\dots\sigma_n^4\rightc^4$ with $\sigma_i\in\{\upc,\downc\}$, and $\sigma_i$ determines the truth value of the $i$-th variable ($\upc=$ True, $\downc=$ False).

\begin{remark*}
Note that the workers have more freedom of movement than what we would like. For example, they can move \leftfull, and in correspondence with a variable, they can move \upfull and \downfull multiple times before going \rightfull. However, our clause gadgets are such that this extra freedom does not permit cheating. More precisely, if the level can be solved, then it can also be solved by a program of the form $\rightc^4\sigma_1^4\rightc^4\dots\sigma_n^4\rightc^4$ with $\sigma_i\in\{\upc,\downc\}$.
\end{remark*}

\subsection{Clause Gadgets}
For each clause $C=(x_a, x_b, x_c)$, with $a<b<c$, we create a clause gadget consisting of two sub-levels. In figure \ref{img:np-general-clause-gadget}, we report the sub-levels for the generic clause $C$. For the reader's convenience, we also show, in figure \ref{img:np-example-clause-gadget}, the gadget for the clause $(x_1, x_2, x_3)$. For $i\in\{1,2\}$, the $i$-th sub-level contains only the $i$-th worker and the accepting cells labeled with $i$. 

\begin{figure}[ht]
\centering
\begin{subfigure}[b]{\textwidth} 
     \centering
     \includegraphics[width=0.82\textwidth]{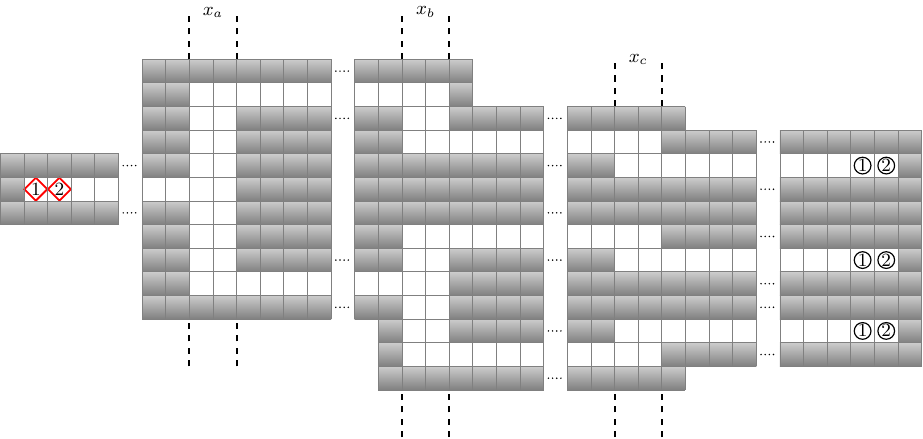}
     \caption{Gadget for the generic clause $(x_a, x_b, x_c)$, with $1\leq a < b < c \leq n$.} 
     \label{img:np-general-clause-gadget}
\end{subfigure}
\begin{subfigure}[b]{\textwidth} 
     \centering
     \includegraphics[width=0.62\textwidth]{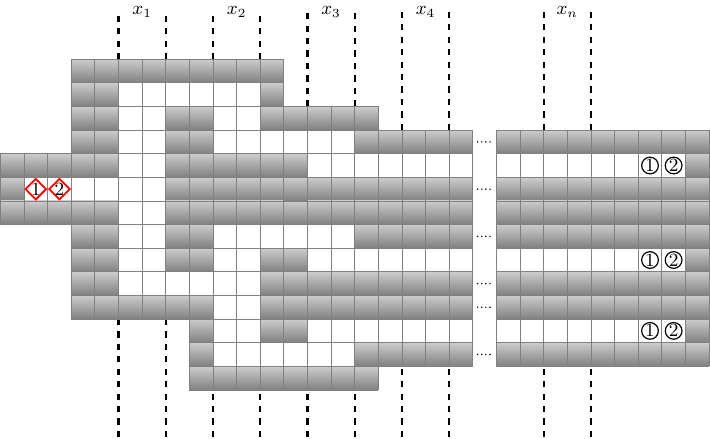}
     \caption{Gadget for the clause $(x_1, x_2, x_3)$. We can handle all the clauses by stretching opportune sections of the gadget (see figure \ref{img:np-general-clause-gadget}).}
     \label{img:np-example-clause-gadget}
\end{subfigure}

\caption{Figure (a) shows the Clause Gadget for the generic clause $(x_a, x_b, x_c)$. It consists of two sub-levels drawn together for brevity. The $i$-th sub-level, $i\in\{1,2\}$, contains only the worker with label $i$ on the left and the accepting cells with label $i$ on the right. The columns associated with the variables are aligned with those of the assignment gadget (figure \ref{img:np-assignment-gadget}). Solely for the sake of clarity, we also show in figure (b) the gadget for the particular clause $(x_1, x_2, x_3)$.}
\label{img:np-clause-gadget}
\end{figure}

Let us first argue that the flexibility of the \emph{assignment gadget} cannot be exploited to generate invalid truth assignments. Observe that the two workers of the clause gadget (of the two distinct sub-levels) must always remain side by side, indeed, if they were to split, the assignment gadget would become unsolvable since a worker would have hit a wall. Given that the two workers must remain side by side, the workers in the clause gadget cannot hit a wall horizontally (otherwise they would overlap, making the level unsolvable). This means that using the \leftfull command is useless: in fact, it can only be used to go back to a variable and possibly change its truth value in \emph{all} the clauses. Moreover, moving \upfull and \downfull multiple times when selecting a truth value does not lead to benefits either. Indeed, assuming that the user moves \leftfull or \rightfull only when none of the workers would hit a wall, then only the last \upfull or \downfull movement is used to decide the truth value of the variable in \emph{all} the clauses. In other words, if the final game level is solvable, then it is also solvable by a program of the form $\rightc^4\sigma_1^4\rightc^4\dots\sigma_n^4\rightc^4$ with $\sigma_i\in\{\upc,\downc\}$.

Now, observe that the workers of the clause gadget for $C=(x_a, x_b, x_c)$ arrive at an accepting cell if and only if they move \upfull exactly once in correspondence with $x_a$, $x_b$, or $x_c$. Note that the vertical movement in correspondence with the other variables is ignored. Since the \upfull movement is associated with the ``True'' truth value, the clause gadget behaves exactly like a clause of the Positive 1-in-3-SAT formula.

\subsection{Conclusion of the proof}
\begin{proof}[Proof of Theorem \ref{thm:np-hard}]
Suppose the Positive 1-in-3-SAT instance is solvable, and let $\alpha_i\in\{\text{True}, \text{False}\}$ for $i\in\{1,\dots,n\}$ be the truth assignment to the variables that solves the instance. Then, letting $\sigma_i=\begin{cases}\upc, \text{ if }\alpha_i=\text{True} \\\downc, \text{ if }\alpha_i=\text{False}\end{cases}$ the program $\rightc^4\sigma_1^4\rightc^4\dots\sigma_n^4\rightc^4$ solves the corresponding \BHessential instance. Conversely, if the \BHessential instance is solvable, we can assume without loss of generality that it is solved by a program of the form $\rightc^4\sigma_1^4\rightc^4\dots\sigma_n^4\rightc^4$ with $\sigma_i\in\{\upc,\downc\}$. Let $\alpha_i = \begin{cases}\text{True}, \text{ if }\sigma_i=\upc \\ \text{False}, \text{ if }\sigma_i=\downc\end{cases}$ then, as we argued, $\alpha$ satisfies all the 1-in-3-SAT clauses. 

Moreover, note that each sub-level of the assignment gadget contains $O(n^2)$ cells, each sub-level of a clause gadget contains $O(n)$ cells, and the five sub-levels of the diagonal gadget have a constant number of cells. Therefore, the complete game level obtained by stacking all the sub-levels contains $O(n^2 + nm)$ cells and can be constructed in polynomial time. Our reduction is complete.
\end{proof}

\section{\pspace-Completeness of \BHholes}\label{sec:pspace}
In this section, we prove Theorem \ref{thm:pspace-comp}. For a positive integer $z$, we use the notation $[z]=\{1,2,\dots, z\}$. Let us start by showing that \BHholes can be solved in polynomial space. 

\begin{observation}\label{obs:membership-pspace}
$\BHholes\in \pspace$
\end{observation}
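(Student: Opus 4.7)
The plan is to show membership via a nondeterministic polynomial-space simulation and then invoke Savitch's theorem. Define a \emph{configuration} to be the tuple $(p_1,\dots,p_k)$ of the current positions of the $k$ workers on a grid with $N$ cells; this is described in $O(k\log N)$ bits, hence polynomial in the input size. The initial configuration is read directly from the level, a configuration is \emph{losing} as soon as some $p_i$ lies on a hole (since by the rules the level is then irrecoverable), and it is \emph{accepting} iff every $p_i$ lies on an accepting cell.

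Given a configuration and a command $\sigma \in \{\upc,\downc,\leftc,\rightc,\upleftc,\uprightc,\downleftc,\downrightc\}$, the successor is computable in polynomial time by updating each worker independently: a worker that would step into a wall stays put, and otherwise it moves one cell in the chosen direction. Because each connected component contains at most one worker, no collisions among workers ever arise and no bookkeeping is needed beyond the position tuple itself.

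The nondeterministic algorithm then stores only the current configuration, repeatedly guesses a command, updates in place, rejects if any worker lands on a hole, and accepts as soon as the stored configuration is accepting. To guarantee termination, it also maintains a step counter bounded by $N^k$, the total number of configurations; this counter fits in $O(k \log N)$ bits, and if any solving program exists at all then one of length at most $N^k$ does, obtained by short-cutting any repeated configuration along its execution trace. The total working space is thus polynomial, so $\BHholes \in \npspace$, and Savitch's theorem yields $\BHholes \in \pspace$. I do not expect a genuine obstacle here; the only care needed is in bounding the search length by the configuration-space size so that the nondeterministic simulation cannot diverge, and in handling the "stuck in a hole" rule via immediate rejection rather than any additional state.
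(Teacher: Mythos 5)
Your proposal is correct and follows essentially the same route as the paper: a nondeterministic polynomial-space simulation that stores the current configuration together with a step counter bounded by the size of the configuration space, followed by an appeal to Savitch's theorem. The only differences are cosmetic (you encode configurations as a tuple of positions rather than one bit per cell, and you make the loop-shortcutting justification for the length bound explicit, which the paper leaves implicit).
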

\begin{proof}
Consider a game level $n\times m$ containing $k$ workers. Since the workers are the only non-static elements of the level, each cell can be in at most one of two states: either it contains a worker, or it does not. Then, there are at most $\binom{nm}{k} \leq 2^{nm}$ possible configurations for the level. Then, consider the non-deterministic Turing Machine $M$ that, given in input a level of \BHholes, maintains the current configuration of the level and a counter of the number of steps performed so far. If all the workers are standing on an accepting cell, then $M$ accepts; if instead the counter exceeds $2^{nm}$, then $M$ rejects. In all other cases, $M$ guesses non-deterministically the next step in $\{\leftc,\rightc,\upc,\downc, \upleftc, \uprightc, \downleftc, \downrightc\}$, updating the configuration and increasing the counter accordingly. It is evident that $M$ solves \BHholes because if the level is solvable, there is a solution using at most $2^{nm}$ instructions. Moreover, the space required in any computation branch is $O(nm + \log(2^{nm}))=O(nm)$. Therefore, we showed that $\BHholes\in \npspace$, then, by Savitch's theorem \cite{savitch}, $\BHholes\in\pspace$.
\end{proof}

Now it remains to show that \BHholes is \pspace-Hard. We do so by exhibiting a polynomial-time reduction from the intersection non-emptiness problem for finite automata.

\subsection{Intersection Non-Emptiness Problem}
Recall that a deterministic finite automaton (DFA for short) is a 5-tuple $(Q, \Sigma, \delta, q_0, F)$, where $Q$ is the set of states, $\Sigma$ is the alphabet, $\delta:Q\times\Sigma\rightarrow Q$ is the transition function, $q_0\in Q$ is the starting state and $F\subseteq Q$ is the set of accepting states. The language accepted by the DFA $A$, called $L(A)$, is the set of strings $x\in\Sigma^*$ such that, starting from $q_0$ and applying $\delta$ repeatedly, $A$ ends up in an accepting state. 

The following is a classic decision problem in automata theory and it was proved to be \pspace-Complete by Kozen \cite{kozen77}:
\begin{definition}[\intersectionnonemptiness]
Given in input a set of $k$ DFAs $\{A_1, A_2, \dots, A_k\}$, with $A_i=(Q_i, \Sigma, \delta_i, q_0^{i}, F_i)$ for $i\in[k]$, say if $\cap_{i=1}^k L(A_i) \neq \varnothing$
\end{definition}

To simplify the exposition, let us assume that all the DFAs have the same set of states $Q$ and the same starting state $q_0$. This is without loss of generality because we can rename the states and add fictitious extra states, without changing the languages of the automata.

\subsection{Reduction Overview}
Consider an instance $I=\{A_1, A_2, \dots, A_k\}$ of the \intersectionnonemptiness, with $A_i=(Q, \Sigma, \delta_i, q_0, F_i)$ for each $i\in[k]$. Without loss of generality, from now on we assume that $|Q|=n$, $Q=\{q_0, q_1, \dots, q_{n-1}\}$, and $|\Sigma|=m$, $\Sigma=\{\sigma_1, \sigma_2, \dots, \sigma_m\}$. Note that our reduction must be polynomial in $k, n,$ and $m$. 

We represent the computation of the DFAs using a string. Specifically, let $\Gamma=\Sigma\cup Q \cup \{\#\}$, where \# is a new symbol, and consider a string of the following form:
\[
R_1\#R_2\#\dots R_r\# \in \Gamma^*
\]
where:
\begin{align*}
& R_j=\sigma^{(j)}q^{(1,j)}q^{(2, j)}\dots q^{(k, j)} & \text{ for }j\in[r],\\
& \sigma^{(j)}\in \Sigma \text{ and } q^{(i,j)}\in Q & \text{ for }i\in[k], j\in[r]
\end{align*}

Call $\rset$ the set of all such strings. Intuitively, $R_j$ contains the $j$-th input symbol and the state of the DFAs after processing the first $j-1$ input symbols.

We say that a string $R_1\#R_2\#\dots R_r\#\in\rset$ is \emph{accepting} if it describes a valid accepting computation for all the automata, that is, if for all $i\in[k]$ the following holds:
\begin{itemize}
    \item $q^{(i,1)}=q_0$
    \item $q^{(i,j+1)}=\delta_i(q^{(i, j)}, \sigma^{(j)})$, for all $j\in[r-1]$
    \item $q^{(i,r)}\in F_i$
\end{itemize}

The main idea of our reduction is to define (i) an encoding of the alphabet $\Gamma$ with strings in $\{\leftc, \rightc, \upc, \downc\}^*$, and (ii) a game level $\mathcal{G}$ of \BHholes, such that, a program solves $\mathcal{G}$ if and only if it is an encoding of some accepting string in $\rset$. This will be enough to conclude our reduction. Indeed, finding an accepting string in $\rset$ is clearly equivalent to finding a string accepted by all the DFAs. More formally:

\begin{observation}\label{obs:nonempty-rset}
Given an instance $\{A_1,A_2, \dots, A_k\}$ of the \intersectionnonemptiness, $\cap_{i=1}^k L(A_i)\neq\varnothing \iff \exists x\in\rset$ accepting
\end{observation}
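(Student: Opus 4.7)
The plan is to establish the bi-implication by an explicit translation between words in $\cap_{i=1}^{k} L(A_i)$ and accepting strings in $\rset$; since $\rset$ was designed as a syntactic encoding of computation traces, both directions should be essentially definitional, with the only bookkeeping subtlety being the role of the last block.

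For the forward direction, starting from a word $w=\sigma^{(1)}\cdots\sigma^{(r-1)}\in\cap_{i=1}^{k} L(A_i)$ (where $r=1$ covers the empty-word case), I would build the blocks $R_j=\sigma^{(j)}q^{(1,j)}\cdots q^{(k,j)}$ by setting $q^{(i,j)}$ equal to the state reached by $A_i$ after reading the prefix $\sigma^{(1)}\cdots\sigma^{(j-1)}$, and padding $R_r$ with an arbitrary $\sigma^{(r)}\in\Sigma$ so that the block is well-formed. The three defining conditions of an accepting string then follow immediately: $q^{(i,1)}=q_0$ by initialisation, the transition condition $q^{(i,j+1)}=\delta_i(q^{(i,j)},\sigma^{(j)})$ by construction, and $q^{(i,r)}\in F_i$ because $w\in L(A_i)$ for every $i\in[k]$.

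For the reverse direction, given an accepting string $R_1\#R_2\#\cdots R_r\#$ in $\rset$, I would extract the candidate word $w=\sigma^{(1)}\cdots\sigma^{(r-1)}$ (dropping the final padding symbol $\sigma^{(r)}$) and verify $w\in L(A_i)$ for every $i\in[k]$ by a short induction on $j\in[r]$ showing that $q^{(i,j)}$ is exactly the state of $A_i$ after reading $\sigma^{(1)}\cdots\sigma^{(j-1)}$. The base case $j=1$ is the starting-state condition, and the inductive step is the transition condition; instantiating at $j=r$ together with $q^{(i,r)}\in F_i$ closes the argument.

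I do not expect any real obstacle beyond keeping indices straight: the observation is a reformulation lemma rather than a substantive result. The only point worth flagging in the write-up is that the final symbol $\sigma^{(r)}$ is never consumed by a transition and exists purely to make $R_r$ a syntactically valid block of $\rset$; this asymmetry between the number of symbols ($r$) and the number of transitions ($r-1$) is the only place where a careless reader could get confused.
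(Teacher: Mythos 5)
Your proposal is correct and matches the paper's proof: both directions proceed by the same explicit translation, padding the last block with an arbitrary symbol of $\Sigma$ in the forward direction and dropping the final (unused) symbol in the reverse direction, with the empty-word case handled when the trace has a single block. The only difference is cosmetic indexing (the paper writes $r+1$ blocks for a word of length $r$, you write $r$ blocks for a word of length $r-1$) and that you spell out the induction the paper leaves implicit.
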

\begin{proof}
If $\sigma_1\sigma_2\dots\sigma_r\in\cap_{i=1}^k L(A_i)$, then the string $R_1\#R_2\#\dots R_{r+1}\#$ where (i) $\sigma^{(i)}=\sigma_i$ for $i\in[r]$ and $\sigma^{(r+1)}$ is any symbol in $\Sigma$, and (ii) the states are set according to the computation, is an accepting string. Conversely, if $R_1\#R_2\#\dots R_{r}\# \in\rset$ is accepting, then $\sigma^{(1)}\sigma^{(2)}\dots\sigma^{(r-1)}\in \cap_{i=1}^k L(A_i)$ (if $r=1$, the empty string is accepted by all DFAs).
\end{proof}

\subsection{The Encoding}

We first associate an integer value to each element in $Q\cup\Sigma=\{q_0,\dots, q_{n-1}, \sigma_1, \dots, \sigma_m\}$, specifically, we define $\num:Q\cup\Sigma\rightarrow\mathbb{N}$ as:
\begin{align*}
\num(q_i) &= 9\cdot (k+2)\cdot(i+1)& \forall i\in\{0,1,\dots,n-1\} \\
\num(\sigma_i) &= 9\cdot (k+2)\cdot(n+2)\cdot (i+1)& \forall i\in[m]
\end{align*}
Let us now define $\cwset$, the set of \emph{clockwise} strings\footnote{named after the fact that $\rightfull,\downfull,\leftfull,\upfull$ is a clockwise movement}, as:
\[
\cwset = \left\{\rightc^{x_1}\downc^{x_2}\leftc^{x_3}\upc^{x_4} \mid x_1,x_2,x_3,x_4 \geq 6\right\} 
\]
Our encoding will associate to each element of $\Gamma$ a subset of clockwise strings, specifically, let $\enc:\Gamma\rightarrow\mathcal{P}(\cwset)$, where $\mathcal{P}(\cwset)$ is the powerset of set $\cwset$, be defined as:
\begin{align}
\enc(\gamma) &= \left\{\rightc^{\num(\gamma)}\downc^{\num(\gamma)}\leftc^{x_3}\upc^{x_4} \in \cwset \mid x_3,x_4\geq 6\right\}& \forall \gamma\in Q\cup\Sigma \label{eq:encgamma} \\
\enc(\#) &= \left\{\rightc^{w_{\#}}\downc^{x_{\#}}\leftc^{y_{\#}}\upc^{z_{\#}}\right\}& \text{where:} \label{eq:encsharp}\\
w_{\#} &= 9\cdot (k+2)\cdot(n+2)\cdot(m+2) & \nonumber \\ 
x_{\#} &= 18\cdot (k+2)\cdot(n+2)\cdot(m+2) & \nonumber \\ 
y_{\#} &= w_{\#} + 4k + 3 & \nonumber \\ 
z_{\#} &= x_{\#} + 3k + 3 \nonumber & 
\end{align}

Note that the sets are disjoint, therefore it is possible to decode a clockwise string. In particular, if $x\in\enc(\gamma)$, with a slight abuse of notation, we say that $\enc^{-1}(x)=\gamma$. Our goal now is to build the game level $\mathcal{G}$, such that, if $\gamma_1\gamma_2\dots \gamma_t\in \rset$ is accepting, then, the program $x_1x_2\dots x_t$ must solve the level $\mathcal{G}$, where $x_i\in \enc(\gamma_i)$. Conversely, if a program solves $\mathcal{G}$, then it must be a concatenation of clockwise strings $x_1x_2\dots x_t$ such that they can be decoded into $\enc^{-1}(x_i)=\gamma_i$, and $\gamma_1\gamma_2\dots\gamma_t\in\rset$ is accepting. 
\begin{remark*}
At this stage, the numbers used in the encoding might appear arbitrary and obscure. We will point out where these numbers are used as we move forward in the proof.
\end{remark*}

\subsection{The Game Level}

We build several independent sub-levels, each containing a single worker. In particular, we build $2k+4$ sub-levels: $\mathbf{S}=\{CW_1, CW_2, CW_3, \enforcesublevel\} \cup \{M_i^{even}, M_i^{odd}\}_{i\in[k]}$. The final game level $\mathcal{G}$ is created by stacking the sub-levels together and isolating them via holes. Formally, $\mathcal{G}=\stack(\mathbf{S})$.

The three sub-levels $CW_1, CW_2, CW_3$, reported in figure \ref{img:pspace-cw-levels}, are solved by all and only the programs that are concatenations of clockwise strings in $\cwset$ (note that $CW_1$ and $CW_2$ also prevent diagonal movements).

\begin{figure}
\centering
\begin{subfigure}[b]{0.25\textwidth} 
     \centering
     \includegraphics[width=\textwidth]{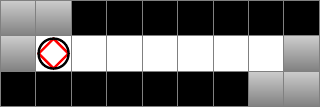}
     \caption{$CW_1$}
 \end{subfigure}
 \begin{subfigure}[b]{0.25\textwidth} 
     \centering
     \includegraphics[height=\textwidth]{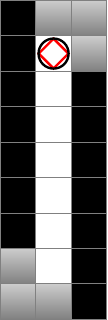}
     \caption{$CW_2$}
 \end{subfigure}
 \begin{subfigure}[b]{0.12\textwidth} 
     \centering
     \includegraphics[width=\textwidth]{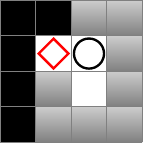}
     \caption{$CW_3$}
 \end{subfigure}
\caption{Sub-levels $CW_1, CW_2, CW_3$, together they ensure that the program is a concatenation of clockwise strings in $\cwset$. Note that $CW_3$ is needed to ensure that the very first movement is \rightfull.}
\label{img:pspace-cw-levels}
\end{figure}

\begin{figure}
\centering
\includegraphics[width=0.12\textwidth]{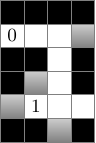} 
\caption{Counter gadget. A worker starting from cell 0 that processes any clockwise string in $\cwset$ ends up in cell 1.}
\label{img:pspace-counter}
\end{figure}

\begin{figure}
\centering
\includegraphics[width=.85\textwidth]{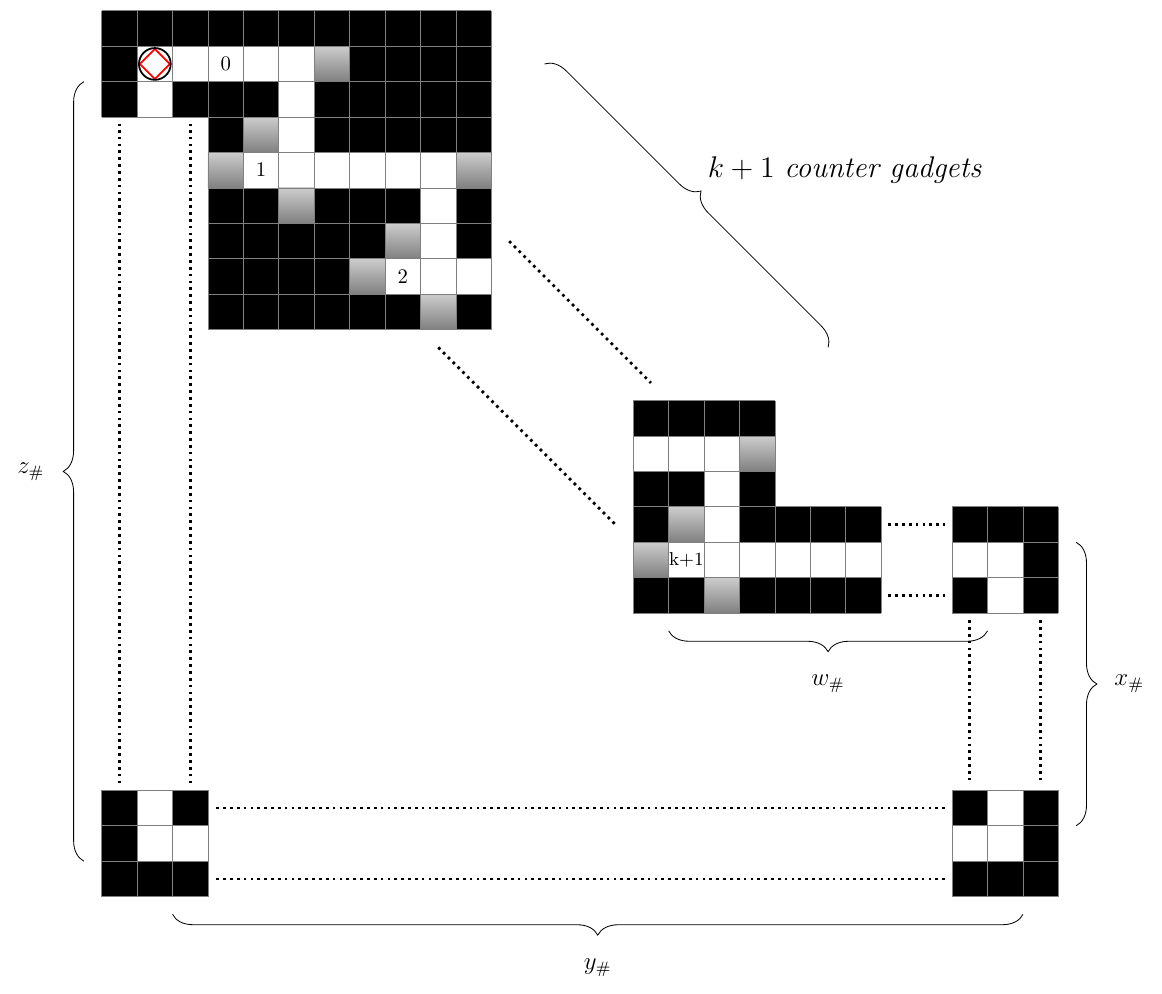} 
\caption{\enforcesublevel sub-level. The values $w_{\#}, x_{\#}, y_{\#}, z_{\#}$ are defined in Equation \eqref{eq:encsharp}. This gadget ensures that accepting programs must be a concatenation of a multiple of $k+2$ clockwise strings each ending with the encoding of $\#$. }
\label{img:pspace-enforcesharp}
\end{figure}

\subsubsection{The Enforce\# Sub-Level}
Let us first introduce the \emph{counter gadget}, in figure \ref{img:pspace-counter}. This gadget is such that, if the worker is standing on the cell labeled with 0, then any clockwise string in $\cwset$ brings the worker to cell 1. Multiple counter gadgets can be concatenated together, and intuitively, these gadgets can be used to \emph{skip} clockwise strings that we do not need to process. 

The \enforcesublevel sub-level, in figure \ref{img:pspace-enforcesharp}, skips the first $k+1$ clockwise strings using $k+1$ counter gadgets, then, it forces the next clockwise string to be $\rightc^{w_{\#}}\downc^{x_{\#}}\leftc^{y_{\#}}\upc^{z_{\#}}$, which is the only encoding of $\#\in\Gamma$, as defined in Equation \eqref{eq:encsharp}. Indeed, if the $(k+2)$th clockwise string was not the encoding of $\#$, the worker would end up in a hole. After processing the whole encoding of $\#$, the worker will be in its starting position again (which is also the only accepting one), ready to possibly process more clockwise strings. 

Therefore, to be more formal, the \enforcesublevel sub-level together with $\{CW_1, CW_2, CW_3\}$, ensures that if $\pi$ is a solving program, then it must be of the form $\overbar{R_1}\overbar{\#}\overbar{R_2}\overbar{\#}\dots\overbar{R_r}\overbar{\#}$, where $\overbar{\#}$ is the only encoding of $\#$, and each $\overbar{R_j}$ is a concatenation of $k+1$ clockwise strings. Note also that all the programs of such form solve these four sub-levels.

\begin{remark*}
The \enforcesublevel sub-level can be built if it holds (i) $y_{\#} = 2 + 4(k+1) + w_{\#} - 3 = w_{\#} + 4k + 3$, and (ii) $z_{\#} = 3(k+1) + x_{\#}$. Note that both these relations are satisfied by our encoding, as reported in Equation \eqref{eq:encsharp}.
\end{remark*}

\subsubsection{The Automata Sub-Levels}
For each DFA $A_i$, $i\in[k]$, we introduce two sub-levels: $M_i^{odd}$ and $M_i^{even}$. Consider a program of the form $\overbar{R_1}\overbar{\#}\dots\overbar{R_r}\overbar{\#}$, where $\overbar{\#}\in\enc(\#)$, and each $\overbar{R_j}$ is a concatenation of $k+1$ clockwise strings. Intuitively, $M_i^{odd}$ will ensure, for each odd $j$, that $\overbar{R_{j+1}}$ follows from $\overbar{R_j}$ according to the computation of $A_i$. $M_i^{even}$ will guarantee the same, but for even $j$'s. These sub-levels will also guarantee that the last state is an accepting one. Therefore, adding all the sub-levels $\{M_i^{odd}, M_i^{even}\}_{i\in[k]}$ will ensure that an accepting program must describe an accepting computation for all the DFAs. 

Before showing the construction of $M_i^{odd}$ and $M_i^{even}$, we introduce three new gadgets.

\begin{figure}[h!]
\centering
\includegraphics[width=0.85\textwidth]{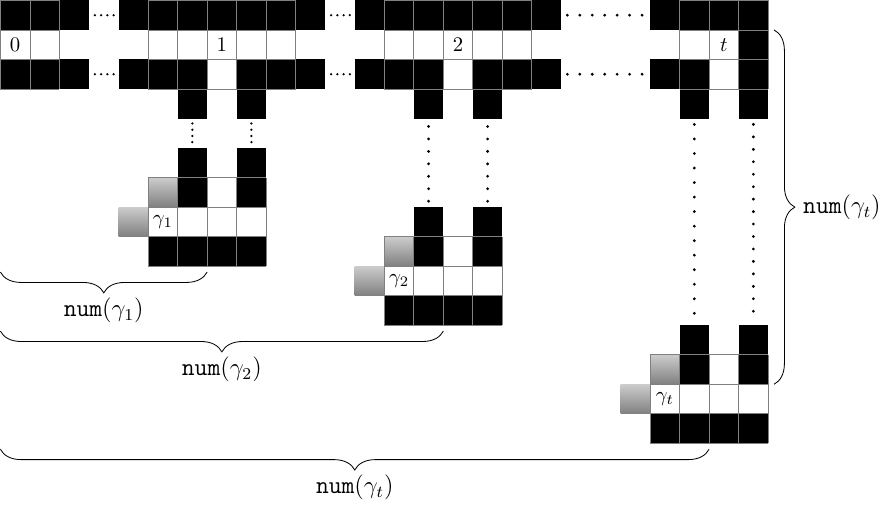} 
\caption{The $S$-\emph{selector gadget} for $S=\{\gamma_1, \gamma_2, \dots, \gamma_t\}\subseteq Q\cup\Sigma$. A clockwise string reaches the cell labeled with $\gamma_{\ell}$ if it moves \rightfull and \downfull exactly $\num(\gamma_{\ell})$ times.}\label{img:pspace-selector}
\end{figure}

The $S$-\emph{selector gadget}, in figure \ref{img:pspace-selector}, is parametrized by a set $S=\{\gamma_1, \gamma_2, \dots, \gamma_t\} \subseteq Q\cup\Sigma$ such that $\num(\gamma_j) < \num(\gamma_{j+1})$. If the worker is standing on the cell labeled with 0 of the selector gadget, then the next clockwise string must be any encoding of any element $\gamma_{\ell} \in S$, which will lead the worker to the cell labeled with $\gamma_{\ell}$. Indeed, from our encoding in Equation \eqref{eq:encgamma}, to verify that a clockwise string $\rightc^{x_1}\downc^{x_2}\leftc^{x_3}\upc^{x_4}\in \cwset$ is an encoding of a certain element $\gamma_{\ell}\in Q\cup\Sigma$, it suffices to check that $x_1=x_2=\num(\gamma_{\ell})$, which is what the gadget does. It is also easy to check that the worker will fall into a hole if the clockwise string is not an encoding of any element of $S$. This gadget will be useful for choosing the next input symbol and performing different checks depending on the current state of the automaton.

\begin{figure}[h!]
\centering
\includegraphics[width=.3\textwidth]{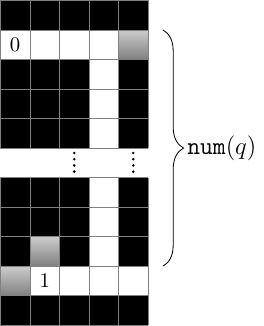}
\caption{The $q$-\emph{forcer gadget} for $q \in Q$. Starting from cell 0, the worker reaches cell 1 with a clockwise string if and only if the number of $\downc$ symbols is equal to $\num(q)$.}\label{img:pspace-forcer}
\end{figure}

For a state $q\in Q$, the $q$-\emph{forcer gadget}, in figure \ref{img:pspace-forcer}, forces the next clockwise string to have a number of \downfull steps equal $\num(q)$. Therefore, if we know that $c\in\cwset$ is an encoding of some state, then, by using the $q$-forcer gadget we impose the constraint that $c$ must be an encoding of $q\in Q$. This gadget will be useful to force the string to respect the transition function.

\begin{figure}[h!]
\centering
\includegraphics[width=0.8\textwidth]{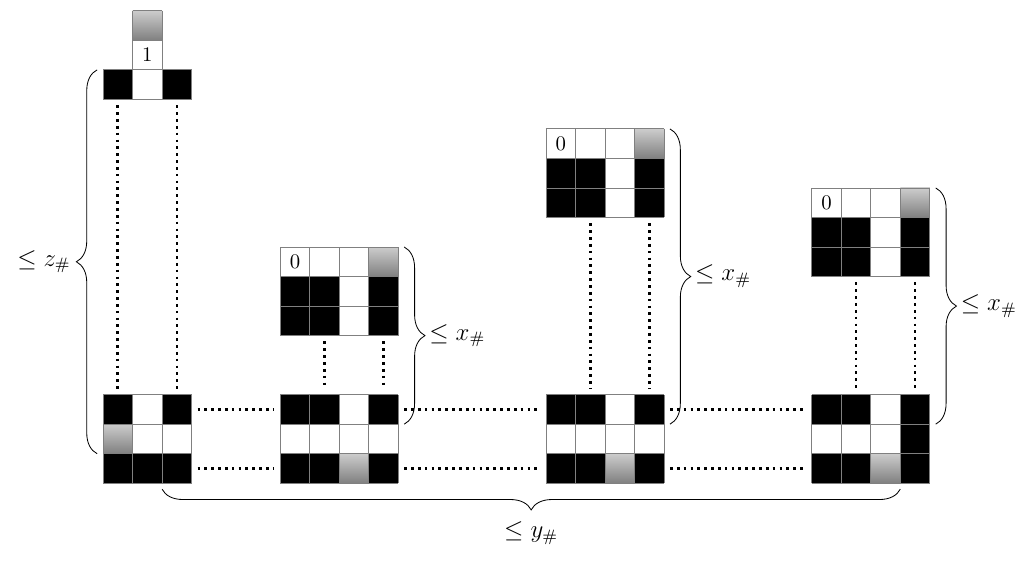}
\caption{The \emph{go-back gadget}. Starting from any cell 0, the clockwise string $\rightc^{w_{\#}}\downc^{x_{\#}}\leftc^{y_{\#}}\upc^{z_{\#}}$, as described in Equation \eqref{eq:encsharp}, brings the worker back to the cell 1. Note that more cells 0 can be added as needed, provided that the highlighted constraints are respected.}\label{img:pspace-goback}
\end{figure}

The last gadget we need is the \emph{go-back gadget}, in figure \ref{img:pspace-goback}. Suppose the worker is in one of the cells labeled with 0, then, by processing the only encoding of $\#\in\Gamma$, it will go back to cell 1. This gadget will be useful to "go back" to a selector gadget and analyze the next chunk of the program.

Note that these three gadgets and the counter gadget can be concatenated together. The only precaution to take is that when concatenating a selector gadget to a counter gadget, the cell 0 of the selector (figure \ref{img:pspace-selector}) must coincide with the cell 1 of the counter (figure \ref{img:pspace-counter}). Similarly, when concatenating a \emph{go-back gadget} to a selector gadget, the cell $1$ of the \emph{go-back} must coincide with the cell $0$ of the selector.

\begin{figure}[h!]
\centering
\includegraphics[width=0.9\textwidth]{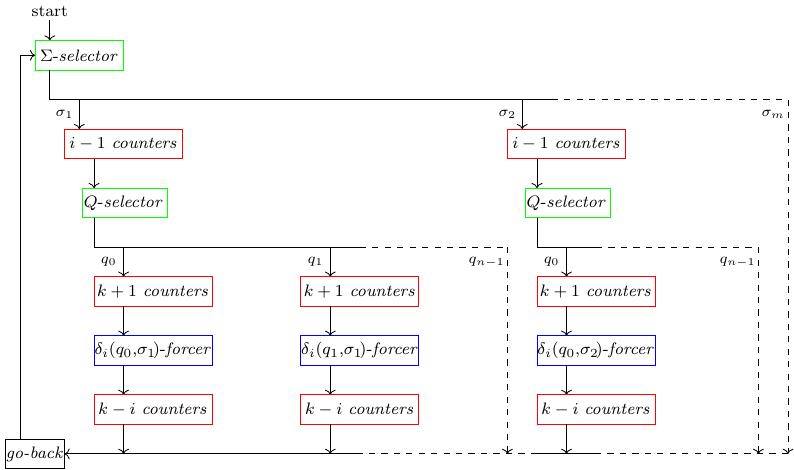}
\caption{$M_i^{odd}$ sub-level associated to the DFA $A_i$, $i\in[k]$. The starting position of the single worker is at the beginning of the $\Sigma$-selector (i.e., the cell 0 of the $\Sigma$-selector, taking figure \ref{img:pspace-selector} as a reference). The first cell of the $\Sigma$-selector is also an accepting cell. Moreover, each branch of the $Q$-selectors corresponding to an accepting state $q_{acc}\in F_i$, contains an accepting cell after the first $k+1-i$ counters (i.e., the accepting cell is the cell labeled with 1, taking figure \ref{img:pspace-counter} as a reference, in the $(k+1-i)$th of the $k+1$ counters).}\label{img:pspace-Mi-odd}
\end{figure}

\begin{figure}[h!]
\centering
\includegraphics[height=0.30\textwidth]{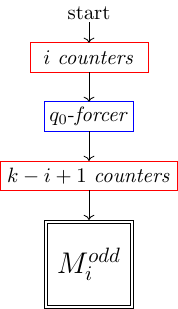}
\caption{$M_i^{even}$ sub-level associated to the DFA $A_i$, $i\in[k]$. The starting position of the single worker is at the beginning of the first counter. The accepting cells are the same described for the $M_i^{odd}$ sub-level.}\label{img:pspace-Mi-even}
\end{figure}

Fixed $i\in[k]$, we report, in figure \ref{img:pspace-Mi-odd}, the sub-level $M_i^{odd}$, and, in figure \ref{img:pspace-Mi-even} the sub-level $M_i^{even}$. Both are represented in a schematic way. Both sub-levels use the go-back gadget, which requires processing exactly the encoding of $\#$: this is guaranteed by the \enforcesublevel sub-level.

We now argue that the sub-levels $\{CW_1, CW_2, CW_3, \enforcesublevel, M_i^{odd}, M_i^{even}\}$ are solved by all and only the programs describing a valid accepting computation of DFA $A_i$.

We know, from $\{CW_1, CW_2, CW_3, \enforcesublevel\}$, that a solving program $\pi$ must of the form: $\overbar{R_1}\overbar{\#}\overbar{R_2}\overbar{\#}\dots\overbar{R_r}\overbar{\#}$, where $\overbar{\#}\in\enc(\#)$, and $\overbar{R_j}=c_1^j c_2^j\dots c_{k+1}^j$ is a concatenation of $k+1$ clockwise strings. Let us first show that $\pi$ can always be decoded:
\begin{itemize}
\item The $\Sigma$-selector of $M_i^{odd}$ (resp. $M_i^{even}$) ensures that the first symbol of each $\overbar{R_j}$, for odd $j$ (resp. even $j$), is the encoding of a symbol in $\Sigma$. Therefore, for all $j\in[r]$, it must be $c_1^j\in\enc(\sigma)$, for some $\sigma\in\Sigma$.
\item Similarly, the $Q$-selectors of $M_i^{odd}$ (resp. $M_i^{even}$) ensure that the $(i+1)$th symbol of each $\overbar{R_j}$, for odd $j$ (resp. even $j$), is the encoding of a state. Therefore, for all $j\in[r]$, it must be $c_{i+1}^j\in\enc(q)$, for some $q\in Q$.
\end{itemize}
Moreover, $\pi$ describes a valid computation:
\begin{itemize}
\item the first forcer of $M_i^{even}$ ensures that the computation starts from the starting state $q_0$: $c_{i+1}^1\in\enc(q_0)$
\item Suppose that the computation is valid up to $\overbar{R_j}$, that is, the state $q=\enc^{-1}(c_{i+1}^j)$ is correctly reached from $q_0$. Let us assume that $j$ is odd. Then, $M_i^{odd}$ will follow the branch $\sigma$ in the $\Sigma$-selector, for some $\sigma\in\Sigma$, and the branch $q$ in the $Q$-selector. Then, the $\delta_i(q, \sigma)$-forcer in $M_i^{odd}$ forces $c_{i+1}^{j+1}$ to be in $\enc(\delta_i(q, \sigma))$, therefore, the state reached in $\overbar{R_{j+1}}$ is correct too. If instead $j$ is even, $M_i^{even}$ ensures that the state reached in $\overbar{R_{j+1}}$ is correct.  
\end{itemize}

It is left to show that the computation described by $\pi$ is accepting for $A_i$. Suppose $r$ is odd. Then, at the end of the computation, the worker of $M_i^{even}$ will be at the beginning of the $\Sigma$-selector (which is an accepting cell), and the worker of $M_i^{odd}$, which is "shifted forward" by $k+2$ clockwise strings, will be at the end of the $(k+1-i)$th counter right after the $Q$-selector, but note that there is an accepting cell in such position only if the worker is in a branch corresponding to a state $q_{acc}\in F_i$, therefore $c_{i+1}^r\in \enc(q_{acc})$ and the computation is accepting. If $r$ is even, the situation is analogous: the worker of $M_i^{odd}$ is at the beginning of the $\Sigma$-selector, but the one of $M_i^{even}$ is at the end of the $(k+1-i)$th counter after a $Q$-selector, and therefore it must be in an accepting branch.

Moreover, one can easily see that any encoding of an accepting computation solves all the sub-levels. Indeed, the $\Sigma$-selector allows the user to choose the input string, and then, since the computation is accepting, one of $M_i^{even}$ and $M_i^{odd}$ will end up in the $(k+1-i)$th counter of an accepting branch and the other will stay at the beginning of the $\Sigma$-selector. Therefore, all the sub-levels would be solved.

\begin{remark*}
Our encoding allows the construction of $M_i^{odd}$ and $M_i^{even}$. First, from our encoding, we have that for $j\in\{0,1,\dots,n-2\}$, $\num(q_{j+1}) - \num(q_j)=9(k+2)$, while the space actually needed between two branches of the $Q$-selector is: $4(k+1) + 5 + 4(k-i) + 4 + 6 < 8(k+1) + 11 \leq 9(k+2)$ where the last 6 takes into account the overhead of the selector. 
Similarly, for $j\in[m-1]$, $\num(\sigma_{j+1}) - \num(\sigma_j) = 9(k+2)(n+2)$, while the space required between two branches of the $\Sigma$-selector is at most $4(i-1) + 9(k+2)(n+1) + 6 < 9(k+2)(n+2)$. To conclude, we need only to check that the constraints for the go-back gadget are satisfied. The width of the whole $M_i^{odd}$ sub-level is at most $9(k+2)(n+2)(m+2) = w_{\#} < y_{\#}$. The height of the sub-level is instead at most $\num(\sigma_m) + 4(i-1) + \num(q_{n-1}) + 4(2k+1-i) + \num(q_{n-1}) + 3 \leq 18(k+2)(n+2)(m+1) + 8k < 18(k+2)(n+2)(m+2) = x_{\#} < z_{\#}$. Therefore all the gadgets can be concatenated together.
\end{remark*}

\subsection{Conclusion of the proof}
Putting all the pieces together, we can prove the theorem.
\begin{proof}[Proof of Theorem \ref{thm:pspace-comp}]
The problem is in \pspace by Observation \ref{obs:membership-pspace}. Consider the following reduction from the \intersectionnonemptiness: an instance $I=\{A_1, A_2, \dots, A_k\}$ is associated with the level $\mathcal{G}=\stack(\{CW_1, CW2, CW_3, \enforcesublevel\} \cup \{M_i^{odd}, M_i^{even}\}_{i\in[k]})$. 

Suppose $I$ can be solved, then, by Observation \ref{obs:nonempty-rset}, there exists an accepting string $x_1x_2\dots x_t\in\rset$. Consider any encoding $\pi=y_1y_2\dots y_t$ such that $y_j\in\enc(x_j)$. Such program can be rewritten in the form $\overbar{R_1}\overbar{\#}\dots\overbar{R_r}\overbar{\#}$. Therefore, as we argued, $\pi$ solves $\{CW_1, CW_2, CW_3, \enforcesublevel\}$, and, given that $I$ is solved, for each $i\in[k]$, $\pi$ describes an accepting computation for $A_i$, then, it also solves $M_i^{odd}$ and $M_i^{even}$. Therefore, $\mathcal{G}$ is solved: all its workers will be standing on an accepting cell at the end of the program.

Suppose now that there exists a program $\pi\in\{\leftc, \rightc, \upc, \downc\}^*$ solving $\mathcal{G}$. As we argued, such a program can be decoded into a string $R_1\#\dots R_r\#\in\rset$. Given that, for each $i\in[k]$, $M_i^{odd}$ and $M_i^{even}$ are solved, it means that the string represents an accepting computation for $A_i$. That is, letting $R_j=\sigma^{(j)}q^{(1,j)}\dots q^{(k, j)}$, it holds: (i) $q^{(i, 1)}=q_0$, (ii) $q^{(i, j+1)}=\delta_i(q^{(i, j)}, \sigma^{(j)})$ for $j\in[r-1]$, and (iii) $q^{(i, r)}\in F_i$. Therefore, the string in $\rset$ is accepting. Using Observation \ref{obs:nonempty-rset}, it follows that $I$ is solvable.

Finally, observe that the number of cells in each sub-level is at most $O(y_{\#}\cdot z_{\#}) = O(k^2 n^2 m^2)$, and since there are $2k+4$ sub-levels, $\mathcal{G}$ has at most $O(k^3 n^2 m^2)$ cells, therefore the reduction can be carried out in polynomial time.
\end{proof}

\section{Conclusions}\label{sec:conclusion}
We analyzed the computational complexity of the video game ``7 Billion Humans''. The game involves controlling multiple workers simultaneously to direct them to some destination cells. When each cell is either empty or contains a wall, the problem of deciding if a level is solvable is \np-Hard, while adding holes makes the problem \pspace-Complete. We always restricted to instances where each connected component contains at most one worker. We also observed that the simple structure of our reductions entails hardness results for other problems. 

While \BHessential is \np-Hard and clearly in \pspace, it is not known whether it lies in \np. We stress that in \BHessential each connected component can contain at most one worker. We leave the problem of completely characterizing \BHessential as an interesting open problem.

\bibliography{7bh}

\end{document}